\makeatletter\@addtoreset{equation}{section}\makeatother
\def\bra#1{\mathinner{\langle{#1}|}}
\def\ket#1{\mathinner{|{#1}\rangle}}
\def\bZ {\mathbb{Z}}
\renewcommand{\title}[1]{\vbox{\center\LARGE{#1}}\vspace{5mm}}
\renewcommand{\author}[1]{\vbox{\center#1}\vspace{5mm}}
\newcommand{\address}[1]{\vbox{\center\em#1}}
\begin{document}

\def\aj{\ref@jnl{AJ}}                   
\def\actaa{\ref@jnl{Acta Astron.}}      
\def\araa{\ref@jnl{ARA\&A}}             
\def\apj{\ref@jnl{ApJ}}                 
\def\apjl{\ref@jnl{ApJ}}                
\def\apjs{\ref@jnl{ApJS}}               
\def\ao{\ref@jnl{Appl.~Opt.}}           
\def\apss{\ref@jnl{Ap\&SS}}             
\def\aap{\ref@jnl{A\&A}}                
\def\aapr{\ref@jnl{A\&A~Rev.}}          
\def\aaps{\ref@jnl{A\&AS}}              
\def\azh{\ref@jnl{AZh}}                 
\def\baas{\ref@jnl{BAAS}}               
\def\bac{\ref@jnl{Bull. astr. Inst. Czechosl.}}
\def\caa{\ref@jnl{Chinese Astron. Astrophys.}}
\def\cjaa{\ref@jnl{Chinese J. Astron. Astrophys.}}
\def\icarus{\ref@jnl{Icarus}}           
\def\jcap{\ref@jnl{J. Cosmology Astropart. Phys.}}
\def\jrasc{\ref@jnl{JRASC}}             
\def\memras{\ref@jnl{MmRAS}}            
\def\mnras{\ref@jnl{MNRAS}}             
\def\na{\ref@jnl{New A}}                
\def\nar{\ref@jnl{New A Rev.}}          
\def\pra{\ref@jnl{Phys.~Rev.~A}}        
\def\prb{\ref@jnl{Phys.~Rev.~B}}        
\def\prc{\ref@jnl{Phys.~Rev.~C}}        
\def\prd{\ref@jnl{Phys.~Rev.~D}}        
\def\pre{\ref@jnl{Phys.~Rev.~E}}        
\def\prl{\ref@jnl{Phys.~Rev.~Lett.}}    
\def\pasa{\ref@jnl{PASA}}               
\def\pasp{\ref@jnl{PASP}}               
\def\pasj{\ref@jnl{PASJ}}               
\def\rmxaa{\ref@jnl{Rev. Mexicana Astron. Astrofis.}}%
\def\qjras{\ref@jnl{QJRAS}}             
\def\skytel{\ref@jnl{S\&T}}             
\def\solphys{\ref@jnl{Sol.~Phys.}}      
\def\sovast{\ref@jnl{Soviet~Ast.}}      
\def\ssr{\ref@jnl{Space~Sci.~Rev.}}     
\def\zap{\ref@jnl{ZAp}}                 
\def\nat{\ref@jnl{Nature}}              
\def\iaucirc{\ref@jnl{IAU~Circ.}}       
\def\aplett{\ref@jnl{Astrophys.~Lett.}} 
\def\apspr{\ref@jnl{Astrophys.~Space~Phys.~Res.}}
\def\bain{\ref@jnl{Bull.~Astron.~Inst.~Netherlands}} 
\def\fcp{\ref@jnl{Fund.~Cosmic~Phys.}}  
\def\gca{\ref@jnl{Geochim.~Cosmochim.~Acta}}   
\def\grl{\ref@jnl{Geophys.~Res.~Lett.}} 
\def\jcp{\ref@jnl{J.~Chem.~Phys.}}      
\def\jgr{\ref@jnl{J.~Geophys.~Res.}}    
\def\jqsrt{\ref@jnl{J.~Quant.~Spec.~Radiat.~Transf.}}
\def\memsai{\ref@jnl{Mem.~Soc.~Astron.~Italiana}}
\def\nphysa{\ref@jnl{Nucl.~Phys.~A}}   
\def\physrep{\ref@jnl{Phys.~Rep.}}   
\def\physscr{\ref@jnl{Phys.~Scr}}   
\def\planss{\ref@jnl{Planet.~Space~Sci.}}   
\def\procspie{\ref@jnl{Proc.~SPIE}}   

\let\astap=\aap
\let\apjlett=\apjl
\let\apjsupp=\apjs
\let\applopt=\ao

\newtheorem{theorem}{Theorem}

\newtheorem{conjecture}{Conjecture}
\begin{titlepage}
\begin{center}
\vskip 1cm

\title{A Normal Form for Single-Qudit Clifford+$T$ Operators}

\author{Akalank Jain$^{1}$, Amolak Ratan Kalra$^{2}$, Shiroman Prakash$^{1}$}

\address{${}^1$Department of Physics and Computer Science, Dayalbagh Educational Institute, Dayalbagh, Agra, India 282005}
\address{${}^2$Department of Computer Science, University of Calgary, Alberta, Canada}

\end{center}

\begin{abstract}
 We propose a normal form for single-qudit gates composed of Clifford and $T$-gates for qudits of odd prime dimension $p\geq 5$. We prove that any single-qudit Clifford+$T$ operator can be re-expressed in this normal form in polynomial time. We also provide strong numerical evidence that this normal form is unique. Assuming uniqueness, we are able to use this normal form to provide an algorithm for exact synthesis of any single-qudit Clifford+$T$ operator with minimal $T$-count. 
\end{abstract}

\vfill

\end{titlepage}

\eject \tableofcontents

\clearpage

\section{Introduction}

A problem of central importance in quantum computation is expressing (or approximating) an arbitrary unitary operator as a finite sequence of quantum gates taken from a given universal set. Within the context of fault-tolerant quantum computing, this universal set is dictated by requirements of fault-tolerance.
When an exact decomposition is possible the process of deriving the sequence is called \textit{exact synthesis}. When no exact sequence of gates exists, the unitary operator can be approximated to some arbitrary accuracy $\epsilon$ -- which we refer to as \textit{approximate synthesis}. Efficient and approximate synthesis of arbitrary single and multi-qubit unitary gates of has seen tremendous progress using ideas from number theory in the past several years\cite{doi:10.1063/1.4927100, kliuchnikov2013fast, bocharov2013efficient, PhysRevLett.110.190502, PhysRevLett.112.140504, RossSelinger2014}. 

Here, we address the problem of exact and approximate synthesis for single-qudit gates for qudits of generic odd prime dimension $p$. Our motivation for focusing our attention on qudits of odd prime dimension \cite{PhysRevA.57.127} includes interesting results in qudit magic state distillation \cite{PhysRevA.92.022312, campbell2014enhanced, ACB, CampbellAnwarBrowne, Howard, doi:10.1080/03081079.2017.1308361,doi:10.1080/03081079.2015.1076405,2020golay, jain2020qutrit, PhysRevA.101.010303}, and the fact that magic, as a resource, may be easier to quantify when one restricts to qudits of odd dimension. \cite{Veitch_2012, Veitch_2014, nature, wang2018efficiently, wang_2019}. Qudits also naturally arise in topological field theories -- the Hilbert space of an $SU(2)_k$ Chern-Simons theory on a torus is $k+1$-dimensional \cite{witten1989} -- see, e.g., \cite{fliss2020knots, schnitzer2020n} for discussion related to quantum computing.

A ``canonical example'' of a set of gates that can be implemented in a fault-tolerant manner is the Clifford group, which can be defined for qubits as well as qudits of arbitrary odd-prime dimension $p$, this can be generated by three gates: the Hadamard gate $H$, the phase gate $S$, and the controlled-sum gate $cSUM$. While there are several schemes for implementing the Clifford group in a fault-tolerant manner; the generators of the Clifford group do not form a universal gate set. To be able to approximate an arbitrary unitary operator, one must supplement the Clifford group with another generator. For qubits, a very natural choice for this generator is the single-qubit gate known in the literature as the $T$ gate, or sometimes, as the $\pi/8$-gate. 

A natural generalization for qubit $T$ gates to qudits of arbitrary odd-prime dimension was presented in \cite{HowardVala}. These qudit $T$-gates can be implemented via state-injection via a certain class of qudit magic states; and, in \cite{ CampbellAnwarBrowne, campbell2014enhanced}, very attractive magic state distillation routines for these qudit magic states were constructed. Given these results, we feel that it is important to address the problem of gate synthesis using the qudit version of the Clifford+$T$ group. 
$H$ and $S$ are single-qudit gates, while $CSUM$ is a multi-qudit gate. We refer to the group generated by $H$ and $S$ as the single-qudit Clifford group, and the group generated by $H$, $S$ and $T$ as the single qudit Clifford+$T$ group. It is well known that, if one has access to a multi-qudit gate (such as CSUM) capable of generating entanglement without ancillas \cite{brylinski2001universal, PhysRevLett.89.247902, PhysRevLett.94.230502}, the problem of synthesizing an arbitrary mult-qudit gate can be reduced to the problem of synthesizing an arbitrary single-qudit gate. Here, we therefore address the problem of exact and approximate synthesis of single-qudit gates for qudits of generic odd prime dimension $p$, using the single-qudit Clifford+$T$ group. 

\section{Normal forms}
As reviewed in  \cite{  GilesSelinger2013}, \textit{normal forms} play a central role in the efficient approximation problem for qubits. 

Let us first recall the definition of a normal form. Given a subset $S$ of a group $G$, a \textbf{word} over $S$ is defined to be a finite string of the form:
\begin{equation}
    a_1^{\pm 1}a_2^{\pm 1}\ldots a_N^{\pm 1}
\end{equation}
where each $a_i \in S$. When the $a_i$ are multiplied in the order specified by the word, one obtains a group element $g \in G$; and we say that the word represents $g$. A word is said to be  \textbf{reduced}, if it contains no substring of the form $a a^{-1}$ for some $a \in S$. 

A subset $S$ of $G$ \textbf{generates}  a group $G$ if the set of all words over $S$ contains all elements of $G$. A \textbf{normal form} for a group $G$, is a subset $\mathcal N$ of  words over $S$, such that each element of $G$ is represented by exactly one word in $\mathcal N$.

The Matsumoto-Amano normal form \cite{matsumoto2008representation, GilesSelinger2013} for qubits is essentially the statement that every single qubit Cliiford+$T$ operator can be written in the following normal form:
\begin{equation}
(T|\epsilon)(HT|SHT)^{*}\mathcal{C}.
\end{equation}
The notation used here is that of regular expressions; $\epsilon$ denotes the empty string and $\mathcal{C}$ denotes an arbitrary single-qubit Clifford operator.  The Matsumoto Amano normal form for qubits has the following three properties:
\begin{enumerate}
\item \textit{Existence} -- Any Clifford+$T$ operator can be represented by a word in Matsumoto-Amano normal form. 
\item \textit{Uniqueness} -- There is exactly one word in Matsumoto-Amano normal form representing any Clifford+$T$ operator. 
\item \textit{T-optimality} -- Of all words representing a single-qubit Clifford+$T$ operator, the Matsumoto Amano normal form uses the minimum number of $T$ gates.
\end{enumerate}
By virtue of these properties, the Matsumoto-Amano normal form directly translates into an efficient algorithm for exact synthesis of single-qubit Clifford+$T$ gates, as described in \cite{GilesSelinger2013}. Approximate synthesis can be carried out by the Solvay-Kitaev algorithm; but a more efficient approach based on number theory has also been developed, (which makes also crucial use of the Matsumoto-Amano normal form) in \cite{RossSelinger2014}.
 
A normal form for  single-qutrit Clifford+$T$ operators with above properties, generalizing the Matsumoto-Amano normal form for qubits, was developed in \cite{PhysRevA.98.032304, ross}.  Here, we show that the normal form for single-qudit gates composed of Clifford and $T$ gates can be extended to qudits of arbitrary odd prime dimension $p\geq 5$.

We are able to provide a polynomial time algorithm that transforms any string of $H$, $S$ and $T$ operators into a word this normal form, without increasing the $T$-count. We also provide strong numerical evidence that this form is unique for arbitrary primes $p$. Assuming uniqueness, we are able show that our normal form is $T$-optimal, and present an algorithm for exact synthesis of any single-qudit Clifford+$T$ operator.

A weakness of the present work is our inability to provide a proof of uniqueness. We believe that it should, in principle, be straightforward to provide a proof of uniqueness for any particular prime $p$, along the lines of the proof given in \cite{PhysRevA.98.032304}; but that this would require computations that are too tedious to perform by hand. It may be possible to automate the strategy of \cite{PhysRevA.98.032304} for proving uniqueness to produce a computer-aided proof, but we do not attempt that here. 

\section{\label{sec:level1}Clifford+$T$ group in $p$ dimensions}
In this section, we review the definition of the Clifford+$T$ group in $p$ dimensions.

\subsection{Clifford group in $p$ dimensions}
Recall that the natural generalization of the qubit Pauli operators $X$ and $Z$ to $p$ dimensional qudits \cite{Gottesman1999} is:
\begin{eqnarray}
X \ket{k} & = & \ket{k+1} \\
Z \ket{k} & = & \omega^k \ket{k}.
\end{eqnarray}
Here, and in what follows, letters $j$ and $k$ denote elements of the finite field $\mathbb Z_p$, with multiplication and addition defined modulo $p$ and $$\omega=e^{2\pi i /p}$$ is the $p$th root of unity. The single qudit Heisenberg-Weyl displacement group $\mathcal D$ is the group generated by $X$ and $Z$.

The single-qudit Clifford group $\mathcal C$ is defined to be the set of operators which map the single-qudit Heisenberg-Weyl displacement group to itself under conjugation \cite{Gottesman1999}:
\begin{equation}
    \mathcal C = \{ C \in SU(p) | C \mathcal D C^\dagger = \mathcal D \}.
\end{equation}

Since we are primarily concerned with the action of these gates on density matrices, overall phases are unphysical. Hence all gates should be thought of as elements of $PSU(p)$ rather than $U(p)$ or $SU(p)$.  However, for our purposes, it is more convenient to consider the single-qudit Clifford group as a subgroup of $SU(p)$, so we will choose to define the generators such that they have determinant $1$. It can be shown that the single-qudit Clifford group is generated by the single-qudit unitaries $S$ and $H$, which, for $p>2$, we define as follows:
\begin{eqnarray}
S & = & \sum_{j=0}^{p-1} \omega^{j(j+1) 2^{-1}} \ket{j} \bra{j} \\
H & = & \delta_p\frac{1}{\sqrt{p}} \sum_{j=0}^{p-1} \sum_{k=0}^{p-1} \omega^{jk} \ket{j}\bra{k}.
\end{eqnarray}
In the above expression, $2^{-1}$ denotes the inverse of $2$ in the finite field $\mathbb Z_p$. (For example, if $p=5$, $2^{-1}=3$.)  We included an overall phase $\delta_p$ in the definition of the Hadamard gate, given by \begin{equation}
    \delta_p = \begin{cases} 1 & p \equiv 1 \mod 8 \\
    -i & p \equiv 3 \mod 8 \\
    -1 & p \equiv 5 \mod 8 \\
    i & p \equiv 7 \mod 8.
    \end{cases}
\end{equation}
With this choice of overall phase, using the results of, e.g., \cite{eigenvectorsH}, one can check that $\det H=1$. One can also check that, for $p\geq 5$, $\det S=1$. 

Let us define $\epsilon_p$ as follows:
\begin{equation}
    \epsilon_p = \begin{cases} 1 & p \mod 4 = 1 \\ i & p \mod 4 =- 1.
    \end{cases}
\end{equation}
Then the overall phase $\delta_p$ can be written as,
\begin{equation}
    \delta_p=\left(\frac{2}{p}\right)_L \epsilon_p,
\end{equation}
where $\left( \frac{2}{p} \right)_L$ is the Legendre symbol (see Appendix \ref{vf-appendix}) from number theory. It is a well-known result in number theory  (see, e.g., \cite{apostol1976introduction}) that 
\begin{equation}
    \sum_{k=0}^{p-1} \omega^{2k^2} = \delta_p \sqrt{p},
\end{equation}
so we can rewrite $H$ as
\begin{equation}
    H = \frac{1}{p} \left( \sum_{m=0}^{p-1} \omega^{2m^2} \right) \sum_{j=0}^{p-1} \sum_{k=0}^{p-1} \omega^{jk} \ket{j}\bra{k}. \label{H-tilde}
\end{equation}

Observe that if $\det C=1$ then $\det \omega^n C=1$ for any $n$. Therefore, we should also add the generator $\omega$ to the list of generators of the Clifford group if we are considering the Clifford group to be a subgroup of $SU(p)$.

\subsection{Clifford group and $SL(2,\bZ_p)$}

The Clifford group is closely related to the semidirect product of $SL(2,\mathbb Z_p)$ and $\mathbb Z_p \otimes \mathbb Z_p$, as shown explicitly in \cite{Appleby}. (See also \cite{Wootters1987, Gross}.) 

The group $SL(2,\bZ_p)$ is the set of matrices 
\begin{equation} 
\begin{pmatrix} 
a & b \\ c & d 
\end{pmatrix}
\end{equation}
whose entries,  $a,~b,~c,~d \in \bZ_p$, with determinant 1.  We will denote elements of $SL(2,\bZ_p)$ by overhats.

$SL(2,\bZ_p)$ can be generated from the two matrices $\hat{S}$ and $\hat{H}$ given by:
\begin{equation}
\hat{S}=\begin{pmatrix} 1 & 0 \\ 1 & 1 \end{pmatrix}, ~\hat{H} = \begin{pmatrix} 0 & -1 \\ 1 & 0 \end{pmatrix}.
\end{equation}
In total, $SL(2,\bZ_p)$ contains $p(p^2-1)$ elements.

Let us now present the relation between the Clifford group and the semi-direct product of $SL(2,\mathbb Z_p)$ and $\mathbb Z_p \otimes \mathbb Z_p$, derived in \cite{Appleby}. We first define the map $V(\hat{F})$ as follows. Let $$\hat{F}=\begin{pmatrix} a & b \\ c & d \end{pmatrix},$$ then
\begin{equation}
V({\hat{F}})=
\begin{cases}
\left( \frac{-2b}{p} \right)_L \epsilon_p \frac{1}{\sqrt{p}} \displaystyle \sum_{j,k=0}^{p-1} \omega^{2^{-1} b^{-1}(ak^2-2jk+dj^2)}\ket{j}\bra{k} & b\neq 0 \\
\displaystyle \left( \frac{a}{p} \right)_L \sum_{k=0}^{p-1}\omega^{2^{-1} ack^2} \ket{ak}\bra{ k} & b= 0. \\
\end{cases} \label{vf}
\end{equation}
Here we have modified the definition of $V(\hat{F})$ given in \cite{Appleby} to include an overall phase factor. This phase factor guarantees that $\det V(\hat{F})=1$.
We show that $V(\hat{F})$ defines a homeomorphism $V:SL(2,\mathbb Z_p) \rightarrow SU(p)$ in Appendix \ref{vf-appendix}.

We then define the map $D:\mathbb Z_p \otimes \mathbb Z_p \rightarrow PSU(p)$, as follows:
\begin{equation}
D_{(x,z)} = \omega^{2^{-1} x z} X^xZ^z.
\end{equation} 
This is a homeomorphism to $PSU(p)$, but not $SU(p)$, as discussed in the Appendix.

In particular, up to an overall phase, any element of the Clifford group $C$ can be written as
\begin{equation}
C \sim D_{\vec{\chi}}V(\hat{F}),
\end{equation} 
where $\vec{\chi} = \begin{pmatrix} x \\ z \end{pmatrix}$ is an element of $\mathbb Z_p^2$, and  is an element of $SL(2,\mathbb Z_p).$ (The overall phase is guaranteed to be a power of $\omega$ with our definition of $V(\hat{F})$.) These functions obey,
\begin{equation}
D_{\vec{\chi_1}}V({\hat{F}_1}) D_{\vec{\chi_2}}V({\hat{F}_2}) \sim D_{\vec{\chi}_1+\hat{F}_1\vec{\chi}_2} V({\hat{F}_1 \hat{F}_2}).\end{equation}
where $\sim$ denotes equality up to an overall phase (which is, again some power of $\omega$).

The generators $S$ and $H$ can be written as $D_{(0,2^{-1})}V_{\hat{S}}$. and $V_{\hat{H}}$ respectively. The Clifford group, defined as a subgroup of $PSU(p)$, therefore has $p^3 (p^2-1)$ elements. These correspond to $p^4 (p^2-1)$ elements of $SU(p)$.

Note that the set of elements of $SL(2,\mathbb Z_p)$ with $b=0$ form a subgroup of $SL(2,\mathbb Z_p)$.

\subsection{$T$ gates in $p$ dimensions}
In various models of fault-tolerant quantum computation, particularly the magic state model (see \cite{msd} and, e.g., \cite{2020golay} and references therein), Clifford unitaries are assumed to be fault-tolerant, and can therefore be implemented with zero cost. However, the single-qudit Clifford group is a finite group, and therefore cannot be used to approximate an arbitrary single-qudit unitary. It is therefore necessary to supplement the Clifford group with another generator to promote it to an infinite group that is dense in $SU(p)$ and can therefore be used to approximate arbitrary single-qudit unitaries. 

It was shown in \cite{CampbellAnwarBrowne} that any non-Clifford gate can be used to promote the Clifford group into a dense subgroup of $SU(p)$, following \cite{Nebe2001, Nebe:2006:SCI:1208720}.  (We checked this explicitly for the $T$-gate defined below, using the  criteria of \cite{sawicki2017universality, sawicki2017criteria} for small primes ($p=3$, $5$, and $7$).) 

For qubits, the most natural choice for such a gate is the $T$ gate (which has a variety of interesting properties, described in, e.g., \cite{Goldengates}). The analogues of $T$ gates in higher odd prime dimensions were defined in \cite{HowardVala} and \cite{CampbellAnwarBrowne}. 
These are defined to be diagonal gates that solve the following equation:
\begin{equation}
    T_{(m,z,\gamma)} XT_{(m,z,\gamma)}^\dagger = \omega^m D_{(1,z)} V\left(\hat{S}^\gamma\right).
\end{equation}
The set of gates above form a group, isomorphic to $\mathbb Z_p^3$, that contains $p^3$ elements, out of which $p^2(p-1)$ are non-Clifford gates. It is easy to see that any gate of the form $T_{(m,z,\gamma)}$ is related to $T_{(0,0,\gamma)}$ by multiplication by Clifford gates. For $p\geq 5$ we choose to define the \textit{canonical} $T$ gate as:
\begin{equation}
T \equiv T_{(0,0,1)}= \sum_k \omega^{k^3 6^{-1}} \ket{k}\bra{k}. 
\end{equation}
One can check that $\det T=1$ and $T^p=1$. Let us define the group $\mathcal T = \{\mathbb I, T, T^2,T^3,T^4, \ldots T^{p-1}\}$, and the set $\mathcal T'=\mathcal T/\mathbb I = \{T,T^2, \ldots T^{p-1}\}$.

Explicitly, for $p=5$, $T$ is given by:
\begin{eqnarray}
T_{(5)}& = & \left(
\begin{array}{ccccc}
 1 & 0 & 0 & 0 & 0 \\
 0 & \omega & 0 & 0 & 0 \\
 0 & 0 & \omega^3 & 0 & 0 \\
 0 & 0 & 0 & \omega^2 & 0 \\
 0 & 0 & 0 & 0 & \omega^4 \\
\end{array}
\right)
\end{eqnarray}

Implementing a $T$ gate would typically require state injection by distilled magic states, and is therefore expensive. To quantify the cost of synthesizing an operator, we therefore need to define the $T$-count of a word.  

Before we define the $T$-count, we should ask whether $T^n$ is related to $T$ by a Clifford transformation. Define 
\begin{equation}
\hat{F}_a=\begin{pmatrix} a & 0 \\ 0 & a^{-1}\end{pmatrix}. \label{Fa}
\end{equation} 
Then we can check that 
\begin{equation}
V_{\hat{F}_a}^\dagger T V_{\hat{F}_a} = T^{a^{3}}.
\end{equation}
For primes such that, $p \mod 3 =2$, the equation $a^3 \equiv n \mod p$ always has a solution for any nonzero $n$. So for these primes, $T^n = V_{\hat{F}_a}^\dagger T V_{\hat{F}_a}$ for some $a$. For $p \mod 3=1$, this is not true, and $T^n$ may not be related to $T$ by a Clifford transformation. However, we expect that, implementing the gate $T^n$ via state injection, following, e.g., \cite{campbell2014enhanced, CampbellAnwarBrowne}, should be no more costly than implementing $T$. 

We therefore define the \textbf{$T$-count} of a word as follows. Any element of the Clifford+$T$ group can be represented as a word over the generating sets $\mathcal C$ and $\mathcal T'$. We define the $T$-count of the word to be the total number of elements of the set $\mathcal T'$ it contains, i.e., the total number of disjoint $T^{n}$ operators it contains (where $n\neq 0$). We emphasize that in our definition, any power of $T$ can be implemented with $T$-count equal to 1. For example, the string $HT^2HT^3H$ has $T$-count $2$.

\section{A normal form for single-qudit Clifford+$T$ gates}

Here we define a normal form, closely following \cite{GilesSelinger2013, PhysRevA.98.032304}. As in \cite{GilesSelinger2013, PhysRevA.98.032304}, we must first define the following subsets of the Clifford group: $\mathcal H$, $\mathcal H'$ and $\mathcal P$. 

Let $\mathcal{\hat{L}}$ denote the subset of $SL(2,\mathbb Z_p)$ consisting of all lower-triangular matrices, i.e., matrices of the form,  $\hat{F}=\begin{pmatrix} a & 0 \\ c & d \end{pmatrix}$. $\mathcal{\hat{L}}$ is generated by $\hat{S}$ and any single matrix of the form $\hat{F}_a$ defined in equation \eqref{Fa}, where $a$ is any primitive root modulo $p$. $\mathcal{\hat{L}}$ contains $p(p-1)$ elements. 

We define $\mathcal P$ to be the subgroup of $\mathcal C$ consisting of all elements of the form $D_{\chi}V(\hat{F})$ where $\hat{F}\in \mathcal L$. It is easy to see that $\mathcal P$ can be generated by $S$, $X$, $V({\hat{F}_a})$, where $a$ is any primitive root modulo $p$. $\mathcal P$ contains $p^3(p-1)$ elements not including overall phases. (Including overall phases, which can only be powers of $\omega$,  $\mathcal P$ contains $p^4(p-1)$ elements.)

It is easy to see from direct computation, that there are $p+1$ left cosets of  $\mathcal L$ in $SL(2,\mathbb Z_p)$, which can each be represented by matrices of the form:
\begin{equation}
\begin{pmatrix}
0 & -1 \\
1 & d 
\end{pmatrix}~\text{, and }
\begin{pmatrix}
1 & 0 \\
0 & 1 
\end{pmatrix},
\end{equation}
for $d=0, \ldots, p-1$. These can be written as $\hat{H}_d=\hat{S}^d \hat{H}$ and $\mathbb I$.
Each of these left cosets correspond to a left coset of $\mathcal P$, each of which can be represented simply by: $H_d \equiv S^d H$ and $\mathbb I$. 

Let us define 
\begin{equation}\mathcal H = \left\{\mathbb I,~ H, ~SH,~ \ldots S^{p-1}H \right\}.\end{equation}
Let us also define
\begin{equation}
\mathcal H ' = \mathcal H \backslash \mathbf{1}.
\end{equation}

Using these components, we propose the following normal form for elements of the single-qudit Clifford+$T$ group:
\begin{equation}
(\mathcal T) (\mathcal H' \mathcal T')^* \mathcal H \mathcal P. \label{normalform}
\end{equation}
The notation used here is a combination of hybrid notation of regular expressions and sets used in \cite{GilesSelinger2013, PhysRevA.98.032304}. For example, $(\mathcal H' \mathcal T')$ represents a string consisting of any element from the set $\mathcal H'$ followed by any element from the set $\mathcal T$.

Explicitly, any word written in the above normal form, can be written as follows:
\begin{equation}
T^{m_0} (H_{d_1}T^{m_1})(H_{d_2}T^{m_2})\ldots((H_{d_n}T^{m_n}))C \label{explicit-normal-form}
\end{equation}
where $C$ is a Clifford operator, $0\leq m_0\leq p-1$, and for $i>0$, $0 \leq d_i\leq p-1$ and $1 \leq m_i \leq p-1$. If $m_0=0$ the $T$-count of the above string is $n$; if $m_0>0$, the $T$-count of the above string is $n+1$. 

It is convenient to consider our normal-form as a string of ``elementary syllables''. We define the \textit{left-most syllable} of a word in the above form as follows.
\begin{enumerate}
  \item If $m_0\neq 0$, and $n=0$, the left-most syllable of $M$ is $T^{m_0}C$.
  \item If $m_0\neq 0$ and $n\neq 0$ the left-most syllable of $M$ is $T^{m_0}H_{d_1}T^{m_1}$
  \item If $m_0=0$ and $n=0$, the left-most syllable is $C$.
  \item If $m_0=0$ and $n \neq 0$ the left-most syllable of $M$ is $H_{d_1}T^{m_1}$. 
\end{enumerate}
Examples: the left-most syllable of $T^2SHTS^2HT^2S$ is $T^2$ and the left-most syllable of $SHTS^HT^2S$ is $SHT=H_1T$. Suppose the left-most syllable of $M$ is $L$. We define the second-left-most syllable of $M$ as the left-most syllable of $L^{-1}M$, and so-on. 


\section{Existence and T-optimality}
Following \cite{GilesSelinger2013} and \cite{PhysRevA.98.032304}, we provide a proof of existence and $T$-optimality of the qudit Matsumoto-Amano normal form. 

The definitions and discussion in the previous section imply the following relations between $\mathcal C$, $\mathcal H$, $\mathcal P$, and $\mathcal H'$:
\begin{eqnarray}
\mathcal{C} & = & \mathcal{H}\mathcal{P} \\
\mathcal{C}\backslash  \mathcal{P} & = & \mathcal H' \mathcal{P} \label{two} \\
\mathcal{P}\mathcal{H'} & \subseteq &  \mathcal H' \mathcal P  \label{three}
\end{eqnarray}

We also observe that
\begin{theorem} \label{PT}
\begin{equation}
\mathcal{P}\mathcal{T} =  \mathcal{T}\mathcal{P}. \label{four}
\end{equation} 
\end{theorem}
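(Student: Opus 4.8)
The plan is to prove the set equality by establishing a single inclusion and then invoking the group structure. Since $\mathcal{P}$ and $\mathcal{T}$ are both groups, $(\mathcal{P}\mathcal{T})^{-1}=\mathcal{T}^{-1}\mathcal{P}^{-1}=\mathcal{T}\mathcal{P}$ and $(\mathcal{T}\mathcal{P})^{-1}=\mathcal{P}\mathcal{T}$; hence if I show $\mathcal{P}\mathcal{T}\subseteq\mathcal{T}\mathcal{P}$, taking inverses of the whole inclusion (which reverses products) gives $\mathcal{T}\mathcal{P}\subseteq\mathcal{P}\mathcal{T}$ automatically, and the two together yield \eqref{four}. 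So it suffices to show that for every $P\in\mathcal{P}$ and every $m\in\{0,\dots,p-1\}$ there exist $m'\in\{0,\dots,p-1\}$ and $P'\in\mathcal{P}$ with $PT^m=T^{m'}P'$.

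First I would record the explicit monomial form of a general $P\in\mathcal{P}$. Every $\hat{F}\in\mathcal{L}$ has $b=0$, so the $b=0$ branch of \eqref{vf} applies; combining $V(\hat{F})$ with $D_{\chi}$ one finds that $P$ acts on basis states as $P\ket{k}=\mu\,\omega^{Q(k)}\ket{ak+x}$, where $a\neq 0$, $x$ is the displacement, $Q(k)=2^{-1}ac\,k^2+za\,k+2^{-1}xz$ is a quadratic polynomial whose coefficients are controlled by the free parameters $c$ and $z$, and $\mu$ is an overall phase (a Legendre sign times a power of $\omega$). Since $T^m\ket{k}=\omega^{m6^{-1}k^3}\ket{k}$, I can then compute $PT^m\ket{k}=\mu\,\omega^{m6^{-1}k^3+Q(k)}\ket{ak+x}$ directly.

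The heart of the argument is to match this against $T^{m'}P'\ket{k}=\mu'\,\omega^{m'6^{-1}(ak+x)^3+Q'(k)}\ket{ak+x}$, forcing $P'$ to carry the \emph{same} permutation data $a,x$ (dictated by equating the kets) and choosing $m',c',z',\mu'$ so the phase polynomials agree. Expanding $(ak+x)^3=a^3k^3+3a^2xk^2+3ax^2k+x^3$ and using the identity $3\cdot 6^{-1}=2^{-1}$ in $\mathbb{Z}_p$ (valid for $p\ge 5$), the cubic term is matched by the single choice $m'=a^{-3}m$, which is always legitimate because $a$ is a unit; the residual quadratic and linear terms then have coefficients of exactly the $2^{-1}$-normalized shape that $Q'$ reproduces, fixing $c'$ and $z'$; and the leftover constant is absorbed into $\mu'$. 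This is really the crux: $\mathcal{P}$ carries precisely the degrees of freedom---a quadratic Gaussian phase plus an overall $\omega$-phase---needed to swallow the sub-leading terms produced by cubing $ak+x$, while the leading cubic piece is handled purely by the reindexing $T^m\mapsto T^{m'}$.

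I do not expect a genuine obstacle, only two bookkeeping points that need care. First, I must verify that the constant phase left over after fixing $c',z'$ is in fact a power of $\omega$ (a short calculation gives it as $\omega^{m'x^3 12^{-1}}$, with $12^{-1}$ defined since $p\ge 5$), so that $P'$ genuinely lies in $\mathcal{P}$ as a subgroup of $SU(p)$, and that the Legendre sign in $\mu'$ equals that in $\mu$, which is immediate since both share the same $a$. Second, I should confirm $m'=a^{-3}m$ ranges over $\{0,\dots,p-1\}$ as required. As an independent cross-check, the same conclusion follows on generators: $S$ and $Z$ are diagonal and commute with every $T^m$; $V_{\hat{F}_a}$ conjugates $T$ to $T^{a^3}$ by the relation $V_{\hat{F}_a}^\dagger T V_{\hat{F}_a}=T^{a^3}$ noted above; and $X$ conjugates $T$ into $T$ times a diagonal quadratic-phase Clifford lying in $\mathcal{P}$, so conjugation of any power of $T$ by any element of $\mathcal{P}$ returns a power of $T$ times an element of $\mathcal{P}$.
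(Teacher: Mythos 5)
Your proof is correct, and it takes a genuinely different route from the paper. The paper's proof is a three-line check on the generators of $\mathcal{P}$: $S$ commutes with $T$, $XT=\omega^{-6^{-1}}TXS^{p-1}$, and $V_{\hat{F}_a}T=T^{a^{-3}}V_{\hat{F}_a}$, from which $\mathcal{P}\mathcal{T}=\mathcal{T}\mathcal{P}$ follows by pushing $T$-powers leftward through an arbitrary word in the generators (an induction the paper leaves implicit). You instead work with an arbitrary element of $\mathcal{P}$ in closed monomial form $P\ket{k}=\mu\,\omega^{Q(k)}\ket{ak+x}$ --- which is the correct parametrization, since every $\hat{F}\in\hat{\mathcal{L}}$ hits the $b=0$ branch of \eqref{vf} --- and match phase polynomials degree by degree. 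Your bookkeeping checks out: the cubic term forces $m'=a^{-3}m$ (consistent with the paper's conjugation relation for $V_{\hat{F}_a}$), the quadratic and linear terms are absorbed by $c'=c-m'ax$ and $z'=z-2^{-1}m'x^2$, and the leftover constant $\omega^{m'x^3 12^{-1}}$ is indeed a power of $\omega$, which lies in $\mathcal{P}$ under the paper's convention that $\mathcal{P}\subset SU(p)$ includes overall $\omega$-phases ($p^4(p-1)$ elements). Your inverse-closure trick for upgrading one inclusion to equality ($\mathcal{T}^{-1}=\mathcal{T}$, $\mathcal{P}^{-1}=\mathcal{P}$, and inversion reverses products) is also valid and replaces the implicit symmetric argument in the paper. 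What your approach buys is a self-contained, non-inductive proof that additionally produces the explicit commutation data ($m'$, $c'$, $z'$, and the residual phase) for any $P\in\mathcal{P}$ at once; what the paper's approach buys is brevity. Your closing ``cross-check'' on generators is essentially the paper's entire proof, so you have in effect supplied both arguments.
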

\begin{proof}
The generators of $\mathcal P$ are $S$, $V(\hat{F}_a)$ and $X$. 
First observe that $S$ clearly commutes with $T$. We also have $XT = \omega^{-6^{-1}} TXS^{p-1}$. Finally observe $V_{F_a}T=T^nV_{F_{a}}$, where $n=a^{-3}$, from which the result follows. \end{proof}

Using these facts, we now prove that any Clifford+$T$ operator can be written in the normal form \eqref{normalform}. 

Any word representing an element of the Clifford+$T$ can be written as 
\begin{equation}
M=C_n T^{a_n} C_{n-1} T^{a_{n-1}} \ldots C_1 T^{a_1}C_0,
\end{equation} 
where $C_i$ are Clifford operators. We now apply the following two step process to $M$:
\begin{itemize}
\item For $0<i<n$: if $C_i \in \mathcal P$, then we can use Theorem \ref{PT} to replace $T^{a_{i+1}}C_iT^{a_i}C_{i-1}$ by $T^{a_{i}'} C'$ to obtain an equivalent word, without increasing the number of $\mathcal T$ gates required. 

\item Use $T^p=1$ to eliminate all instances of $T^p$.
\end{itemize}

We then simplify the resulting Clifford operators occurring in the string sandwiched between $T$ operators, and repeat this process until we obtain an expression of the form $$M=C_{n'} \mathcal T C_{n'-1} \ldots C_1 \mathcal T C_0$$ where each $C_i \in \mathcal C/\mathcal{P}$ for $n'>i>0$, $C_0 \in \mathcal C$ and $C_{n'} \in \mathcal C$.

Because $\mathcal{C}/\mathcal{P} =  \mathcal H' \mathcal P$, (and $\mathcal{C}=\mathcal H \mathcal P$ for the leftmost operator), using relations \eqref{two}, \eqref{three} and \eqref{four}, we can push any elements of $\mathcal P$ occurring in any  $C_i$ to the far right of the expression as follows:
\begin{eqnarray}
\mathcal C' \mathcal T \mathcal C' \ldots & = & \mathcal H' \mathcal P \mathcal T \mathcal H' \mathcal P \ldots\\
                                          & = & \mathcal H' \mathcal T \mathcal P  \mathcal H' \mathcal P\ldots \\
                                          & \subseteq & \mathcal H' \mathcal T  \mathcal H' \mathcal P \ldots
\end{eqnarray}
so that each $C_i \in \mathcal H'$, for $i \neq 0$, $i \neq n$. The leftmost operator $C_n$ satisfies $C_n \in \mathcal H$ and the rightmost operator $C_0$  satisfies $C_0 \in \mathcal H \mathcal P$.

The final result is an expression of the form
\begin{equation}
M = \mathcal H \mathcal T \mathcal H' \mathcal T \mathcal H' \ldots \mathcal T \mathcal H \mathcal P .
\end{equation}
which is equivalent to the normal form given in equation \eqref{normalform}.

The above procedure proof directly translates into an efficient (polynomial time) algorithm for rewriting any word representing a Clifford+$T$ operators into the proposed normal form. During the process of conversion of any word representing a  Clifford+$T$ operator into the normal form \eqref{normalform} via this algorithm, the $T$-count may decrease or remain unchanged. Suppose one begins with a word that represents a single-qudit Clifford+$T$ operator using the least possible $T$-count. We could rewrite it in the normal form \eqref{normalform} using the procedure outlined above, without increasing its $T$-count. Hence we have shown \textit{existence} and a weak-form of \textit{T-optimality}.

Below, we will also provide substantial numerical evidence that the representation of any single qudit operator by the normal form \eqref{normalform} is unique. Assuming this is true, the output of the algorithm above will always result in the word representing the given Clifford+$T$ operator using the least possible $T$-count.

\section{Uniqueness and exact synthesis}

In this section we present evidence that the normal form defined in \eqref{normalform} is unique -- i.e., that any element of the single-qudit Cliford+$T$ group can be represented by exactly one word in the normal form \eqref{normalform}. Our strategy is based on the proofs of uniqueness for qubits given in \cite{GilesSelinger2013}, and relies on the concept of a least-denominator exponent, which we define below. In particular, we present a conjecture relating the least-denominator exponent of an operator, written as an unitary matrix, to its $\mathcal H'$- count. This conjecture also translates into an exact synthesis algorithm. 

Note, however, that, unlike the case of qubits we have not provided a complete algebraic characterization of Clifford+$T$ operators. We believe that this is not possible. Indeed, in the mathematics literature, it is known that the existence of a simple normal form such as ours (which essentially gives rise to a tree-like structure) is usually incompatible with a simple algebraic characterization of the group (although there are a handful of exceptions to this rule).\footnote{We thank Peter Sarnak for discussions on this point.} 

\subsection{Algebraic preliminaries}
Let us define the following rings:
\begin{itemize}
  \item $\mathbb Z[1/p] = \{\frac{a}{p^n} | ~a \in \mathbb Z,~ n \in \mathbb N  \}$.
  \item $\mathbb Z[\omega] = \{ a_0 +a_1 \omega + a_2 \omega^2 + \ldots a_{p-2}\omega^{p-2}|~a_i \in \mathbb Z \}$. 
  \item $\mathbb Z[\omega,1/p] = \{ a_0 +a_1 \omega + a_2 \omega^2 + \ldots a_{p-2}\omega^{p-2}|~a_i \in \mathbb Z[1/p] \}$. 
\end{itemize}

If $a \in \mathbb Z$, we define $\bar{a} \equiv a \mod p$ to denote the corresponding element in $\mathbb Z_p$. We define the parity map $P: \mathbb Z[\omega] \rightarrow \mathbb Z_p$ as follows:
\begin{equation}
P( a_0 +a_1 \omega + a_2 \omega^2 + \ldots a_{p-2}\omega^{p-2}) \equiv (\bar{a}_1+\bar{a}_2+\bar{a}_3 +\ldots).
\end{equation}
Let us define 
\begin{equation}
\chi=1-\omega.
\end{equation} 
The parity map is a ring homomorphism. It can be thought of as a map onto the equivalence classes induced by the equivalence relation $a \sim b$ if $(a-b) = c \chi$, where $a,~b,~c \in \mathbb Z[\omega]$. 

We define the following \textit{denominator exponent} functions of $y \in \mathbb Z[\omega, 1/p]$: Let $\eta \in \mathbb Z[\omega]$ satisfy $p \Big| \eta^n$ for some $n$. Then a denominator exponent of $y$ relative to $\eta$ is any non-negative integer $k$ such that $\eta^k y \in \mathbb Z[\omega]$. The \textit{least denominator exponent} of $y$ with respect to $\eta$ is the smallest such $k$, and is denoted as $d_\eta(y)$. For example, $d_{\chi}(\frac{1}{p})=p-1$. We will only use denominator exponents relative to $\chi$. 

We define the \textit{denominator exponent} of a matrix $M$ with entries in $\mathbb Z[\omega, 1/p]$ to be any non-negative value of $k$ such that all the entries of $\chi^k M$ are in $\mathbb Z[\omega]$. We define \textit{the least denominator exponent} of a matrix $M$ to be the smallest such value, which we denote as $d(M)$. 

Let $k$ be a denominator exponent of $M$; then $\chi^k M$ has entries in $\mathbb Z[\omega]$. We define 
\begin{equation}
P_k(M) \equiv P( \chi^k M),
\end{equation} 
where $P(\chi^k M)$ acts on each of the entries of $\chi^k M$.

\subsection{Evidence for uniqueness}
Notice that, the entries of $H$ (as presented in equation \eqref{H-tilde}), $S$ and $T$ are elements of $\mathbb Z[\omega,1/p]$. Therefore any element of the Clifford+$T$ group, (thought of as a subgroup of $SU(p)$), is a $p\times p$ matrix with entries in $Z[\omega,1/p]$. 
 
Let us define the $H'$-count of an Clifford+$T$ operator in normal form as the number of $\mathcal H'$ operators that it contains. Explicitly, for the word given in equation \eqref{explicit-normal-form}, the $H'$-count can be computed as follows. If its right-most Clifford operator $C \in \mathcal P$, the $H'$-count of the word in equation \eqref{explicit-normal-form} is $n$. If $C \not\in P$, the $H'$-count of a word the word in equation \eqref{explicit-normal-form} is $n+1$. The $\mathcal H'$-count is clearly related to the $T$-count in a straightforward way.

Through numerical experimentation we obtain the following conjecture:
\begin{conjecture} Let $M$ be a $p\times p$ special unitary matrix $M$ representing a Clifford+$T$ operator. The least denominator exponent $k$ of $M$ is related to the $H'$-count $h(M)$  via \label{uniqueness-conjecture}
\begin{equation}
k= \begin{cases} 
\lceil nh(M)/2 \rceil+n & h(M) \geq 1 \\
0 & h(M)=0.
\end{cases} 
\end{equation}
where $n=\lceil p/3 \rceil-1$.
\end{conjecture}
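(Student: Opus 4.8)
The plan is to prove the conjecture by induction on the $H'$-count $h=h(M)$, following the denominator-exponent arguments of \cite{GilesSelinger2013, PhysRevA.98.032304}. For the base case $h=0$ the operator has the form $T^{m_0}C$ with $C\in\mathcal{P}$; since each generator $S$, $X$, $V(\hat{F}_a)$ and $T$ has entries that are roots of unity, hence units in $\mathbb{Z}[\omega]$, such an $M$ already has entries in $\mathbb{Z}[\omega]$ and so $d(M)=0$, as claimed. For the inductive step I would peel off the leftmost $\mathcal{H}'$-syllable, writing $M=(H_{d}T^{m})M''$ with $h(M'')=h-1$, and track how $d$ changes. Because $T^{m}$ and the $S^{d}$ factor of $H_{d}=S^{d}H$ are diagonal with unit entries, they leave the least denominator exponent unchanged, so the entire effect is that of a single Hadamard.

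The central computation is the $\chi$-adic behaviour of $HM''$. Writing $M''=\tilde{M}/\chi^{d(M'')}$ with $\tilde{M}$ a matrix over $\mathbb{Z}[\omega]$ and using \eqref{H-tilde}, one has
\begin{equation}
(HM'')_{jl}=\frac{\sum_m\omega^{2m^2}}{p\,\chi^{d(M'')}}\sum_k\omega^{jk}\tilde{M}_{kl}.
\end{equation}
Since $v_\chi(p)=p-1$ while the Gauss sum satisfies $v_\chi\!\big(\sum_m\omega^{2m^2}\big)=(p-1)/2$, the scalar prefactor has $\chi$-adic valuation $-(p-1)/2$ and thus contributes exactly $(p-1)/2$ to the denominator, giving
\begin{equation}
d(HM'')=\tfrac{p-1}{2}+d(M'')-v,\qquad v=\min_{j,l}\,v_\chi\!\Big(\sum_k\omega^{jk}\tilde{M}_{kl}\Big),
\end{equation}
where $v_\chi$ is the $\chi$-adic valuation. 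The entire content of the conjecture then reduces to the claim that the cancellation $v$ makes the net increment per syllable equal to $\lceil n/2\rceil$ or $\lfloor n/2\rfloor$ with $n=\lceil p/3\rceil-1$, apart from a one-time extra cost $+n$ incurred at the first Hadamard; the ceiling in $\lceil nh/2\rceil$ records the resulting two-step, parity-of-$h$ alternation.

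To evaluate $v$, the plan is to expand every $\omega$ in sight via $\omega=1-\chi$ and read off the lowest-order surviving coefficient in $\mathbb{Z}[\omega]/(\chi)\cong\mathbb{Z}_p$, which turns $v$ into the order of vanishing of a hierarchy of moment conditions on the coefficient data. In these conditions the $k$-dependence is carried by the linear phase $jk$ together with the cubic phase $6^{-1}mk^3$ supplied by the interleaved $T$-gates, so $v$ is governed by the order of vanishing of exponential sums attached to a cubic polynomial in $k$; the factor of three in $\lceil p/3\rceil$ is precisely the signature of this cubic phase. The upper bound on $d(M)$ then follows from a lower bound on the number of vanishing moments, while the matching lower bound requires showing that no further cancellation occurs, i.e.\ that $P_k(M)\neq 0$ at $k$ equal to the conjectured exponent. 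Since the parity map $P$ is a ring homomorphism into $\mathbb{Z}_p$, I would propagate $P_k$ through the recursion and show that each syllable carries it to a nonzero value.

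The hard part will be the exact, uniform-in-$p$ evaluation of $v$ together with the nonvanishing of $P_k(M)$. The objects that arise are generalized Gauss and Jacobi sums for the cubic character, and controlling their $\chi$-adic valuation --- together with the delicate parity-dependent alternation encoded by the ceiling --- for \emph{every} prime $p\geq 5$ simultaneously is what resists a by-hand argument. This is exactly the difficulty noted above: the case-by-case strategy of \cite{PhysRevA.98.032304} should settle any fixed $p$, but making it uniform appears to require a computer-assisted analysis. It is this step, rather than the inductive skeleton, that keeps the statement a conjecture.
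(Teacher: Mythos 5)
There is an important mismatch of expectations here: the paper does \emph{not} prove this statement. It is presented as a conjecture, supported only by numerical evidence (random normal-form words with $H'$-count $\leq 20$ tested for all primes $p \leq 29$), and the authors explicitly say that a proof for any fixed $p$ should be possible by explicit calculation but would be tedious. So there is no paper proof to compare against; the only question is whether your attempt closes the gap the authors left open. It does not, and to your credit you say so yourself. The parts of your skeleton that can be checked are sound and match the paper's framework: the base case is correct ($h(M)=0$ forces $M=T^{m_0}C$ with $C\in\mathcal P$, a monomial matrix with root-of-unity entries, so $d(M)=0$); the valuation bookkeeping is right, since $p$ is an associate of $\chi^{p-1}$ in $\mathbb Z[\omega]$, the quadratic Gauss sum has $\chi$-adic valuation $(p-1)/2$, and left multiplication by the diagonal unit matrices $T^m$ and $S^d$ preserves entrywise valuations, so each syllable indeed reduces to a single Hadamard with $d(HM'')=\frac{p-1}{2}+d(M'')-v$ (modulo the minor caveat that you should write $\max\bigl(0,\frac{p-1}{2}+d(M'')-v\bigr)$). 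Your reduction is also arithmetically consistent with the conjectured formula: for $p=5$, $n=1$, the first-syllable increment $n+\lceil n/2\rceil=2=(p-1)/2$ corresponds to no cancellation at the first Hadamard, and the alternation $\lceil n/2\rceil,\lfloor n/2\rfloor$ reproduces $\lceil nh/2\rceil+n$.

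The genuine gap is that the entire quantitative content of the conjecture has been relocated into the unproved claim that the cancellation $v$ takes exactly the alternating values $\frac{p-1}{2}-\lceil n/2\rceil$ and $\frac{p-1}{2}-\lfloor n/2\rfloor$ after the first syllable, for \emph{every} choice of $d_i$ and $m_i$ and every prime $p\geq 5$, together with the nonvanishing needed for the lower bound. Your appeal to cubic exponential sums is a plausible heuristic for the origin of $\lceil p/3\rceil$, but no mechanism is given for proving either bound. Moreover, the inductive invariant you propose to propagate, $P_k(M)\neq 0$, is too weak: the value of $v$ at a given step depends on finer $\chi$-adic data of $\chi^{d(M'')}M''$ (several leading coefficients of its $\chi$-expansion, not just the leading parity), so the induction must carry a structured classification of the possible leading data --- the analogue of the case tables in the qutrit proof of Prakash--Jain--Kapur and Figure 2 of Giles--Selinger, which the present paper itself says appear ``rather complicated'' for general $p$. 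Identifying that invariant and verifying it closes under multiplication by $H_dT^m$ \emph{is} the proof; without it, your write-up is a correct reduction of the conjecture to its hard core, and the statement remains exactly as open as it was in the paper.
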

We tested this conjecture using a large number of random strings generated in our normal form with $H'$-count $\leq 20$, for all primes $p\leq 29$. We believe it would not be hard, in principle, to prove this conjecture for any given $p$, by explicit calculation, but it may be tedious.

The above conjecture implies that two words expressed in our normal form with different $H$'-counts represent different elements of the Clifford+$T$ group. 

We also checked, numerically, that two distinct words in our normal form with the same $H'$-count represent distinct Clifford+$T$ operators. In principle, we also believe that, one can determine the left-most syllable of the word representing a Clifford+$T$ operator $M$ from the parity map $P_k(M)$, via rules analogous to those given in Appendix A of \cite{PhysRevA.98.032304} for qutrits and Figure 2 of \cite{GilesSelinger2013} for qubits. However, these rules appear to be rather complicated, so we do not present them here. 

\subsection{Exact synthesis}
The above conjecture, if true, translates into a simple exact-synthesis algorithm. Suppose a matrix $M \in SU(p)$ is a candidate Clifford+$T$ operator. The following algorithm can be used to determine its first $n=\lceil p/3 \rceil-1$ left-most syllables
\begin{enumerate}
\item Check if $M$ is a Clifford operator $C$ using a finite look-up table. If so, the left-most syllable of $M$ is $C$ and exact synthesis is complete. Otherwise carry out the following steps. 

\item Check if all the entries of $M$ are in $\mathbb Z[\omega,1/p]$.  If so, calculate the least-denominator exponent $k$ of $M$, and move on to the next step. (If not, then $M$ is not a Clifford+$T$ operator and exact synthesis fails.)

\item Conjecture \ref{uniqueness-conjecture} determines the $H'$-count of $M$ from $k$. If $M$ can be expressed in our normal form, there are only a finite number of possibilities for its $n=\lceil p/3 \rceil-1$ left-most syllables: call each of these $L_i$. Calculate the least-denominator exponent $d_\chi\left((L_i)^{-1}M\right)$ for each $L_i$.  

\begin{enumerate}
\item If $d_\chi\left((L_i)^{-1}M\right) \geq k$ for all $L_i$ then $M$ is not a Clifford+$T$ operator and exact synthesis fails. 
\item If $d_\chi\left((L_i)^{-1}M)\right<k$ for any $L_i$ then the first $n=\lceil p/3 \rceil-1$ left-most syllables of $M$ are given by $L_i$. (It is a consequence of Conjecture \ref{uniqueness-conjecture} that not more than one $L_i$ can satisfy this condition.) 
\end{enumerate}
\end{enumerate}
If $M$ is not a Clifford operator, we repeat this algorithm on $(L_i^{-1}) M$ to obtain the next $n=\lceil p/3 \rceil-1$ left-most syllables. If repeat this process, eventually we will obtain a Clifford operator, or the algorithm will fail, indicating that $M$ is not an element of the Clifford+$T$ group.

\section{Approximating elements of $SU(p)$}
To approximate an element of $SU(p)$ that is not in the Clifford+$T$ group, using Clifford+$T$ operators, the Solvay-Kitaev algorithm \cite{solvaykitaev} can be used. A gate $U \in SU(p)$ is said to $\epsilon$-approximated by a gate $G$ if $d(G,U)\equiv ||G-U||\equiv \sup_{|\psi|=1}||G-U||\psi \leq \epsilon$.

\begin{theorem}
Let $\epsilon > 0$. There exists $U \in SU(p)$ whose $\epsilon$-approximation by a Clifford+$T$ circuit requires an operator of $T$ count at least:
\begin{equation}
    n\gtrsim \frac{\ln\left(\frac{B}{A}\right)+(p^{2}-1)\ln\left(\frac{1}{\epsilon}\right)}{\ln(p(p-1))}, \label{t-bound}
\end{equation} where 
\begin{equation}A=\frac{p^4\pi^{(p^2-1)/2}}{(p(p-1)-1)\Gamma\left(\frac{p^2-1}{2}\right)},~\text{and } B=\pi^{\frac{(p-1)(p-2)}{2}}\sqrt{2^{p-1}p}\prod_{k=1}^{p-1}\frac{1}{k!}.
\end{equation}
\end{theorem}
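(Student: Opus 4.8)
The plan is to prove this by a volume-counting (covering) argument, of the same flavor as the non-constructive lower bounds in the qubit literature. Since the statement is a ``there exists'' claim, I would prove its contrapositive: if \emph{every} $U \in SU(p)$ admitted an $\epsilon$-approximation by a Clifford+$T$ operator of $T$-count at most $n$, then the operator-norm balls $B_\epsilon(G)=\{U:\|U-G\|\le\epsilon\}$, centered at all Clifford+$T$ operators $G$ of $T$-count $\le n$, would cover $SU(p)$. Comparing the total volume of these balls with the volume of $SU(p)$ then forces $n$ to be large, which is exactly the claimed bound.

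First I would count $N(n)$, the number of distinct Clifford+$T$ operators of $T$-count at most $n$, using the normal form \eqref{explicit-normal-form}. Each interior syllable $H_{d_i}T^{m_i}$ contributes a factor of $p(p-1)$ (there are $p$ choices of $d_i$ and $p-1$ nonzero choices of $m_i$), while the leftmost $T^{m_0}$ and the trailing $\mathcal{H}\mathcal{P}$ contribute bounded Clifford factors. Summing the geometric series $\sum_{k=0}^{n}(p(p-1))^k=\frac{(p(p-1))^{n+1}-1}{p(p-1)-1}$ produces the denominator $p(p-1)-1$ in $A$ and the overall growth rate $(p(p-1))^n$ whose logarithm is the denominator of the bound. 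The Clifford prefactor of order $p^4(p^2-1)$, counted earlier when realizing the Clifford group inside $SU(p)$, is what propagates (up to the crude bookkeeping permitted by $\gtrsim$) into the constant $A$.

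Next I would supply the two geometric volumes. The volume of a small operator-norm ball in $SU(p)$, a manifold of dimension $p^2-1$, is to leading order the Euclidean ball volume in the tangent space, $\frac{\pi^{(p^2-1)/2}}{\Gamma\left(\frac{p^2-1}{2}+1\right)}\epsilon^{p^2-1}$, supplying the $\pi^{(p^2-1)/2}$, the Gamma function, the $\epsilon^{p^2-1}$, and hence the $(p^2-1)\ln(1/\epsilon)$ term. The volume of $SU(p)$ itself, in the normalization induced by this metric, is the constant $B$, for which I would invoke the known closed form in terms of $\prod_{k=1}^{p-1}k!$ and powers of $\pi$. A reassuring consistency check is that the discrepancy between the $\Gamma\left(\frac{p^2-1}{2}+1\right)$ of the true ball volume and the $\Gamma\left(\frac{p^2-1}{2}\right)$ written in $A$ is exactly resolved by $\Gamma\left(\frac{p^2-1}{2}+1\right)=\frac{p^2-1}{2}\Gamma\left(\frac{p^2-1}{2}\right)$, whose factor $\frac{p^2-1}{2}$ cancels the $(p^2-1)$ in the Clifford count $p^4(p^2-1)$ to leave the bare $p^4$ of $A$ (the residual factor of $2$ being absorbed by $\gtrsim$).

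Finally, the covering inequality $N(n)\cdot\frac{\pi^{(p^2-1)/2}}{\Gamma\left(\frac{p^2-1}{2}\right)}\epsilon^{p^2-1}\gtrsim \mathrm{Vol}(SU(p))=B$ rearranges to $(p(p-1))^{n}\gtrsim B/(A\,\epsilon^{p^2-1})$, and taking logarithms and solving for $n$ yields \eqref{t-bound}. The main obstacle is the geometric input of the third step: one must fix a single consistent normalization of the Riemannian metric on $SU(p)$ so that the extrinsic operator-norm distance $d(G,U)=\|G-U\|$, the leading small-ball volume, and the closed-form total volume are all measured compatibly, and one must control the $O(\epsilon)$ corrections to the Euclidean ball volume uniformly near every point. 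Relating the extrinsic operator-norm ball to the intrinsic Riemannian ball is the delicate point; the combinatorial counting and the concluding logarithmic manipulation are routine by comparison.
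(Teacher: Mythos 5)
Your proposal is correct and follows essentially the same route as the paper's own proof: a volume-counting argument that bounds the number of distinct Clifford+$T$ operators of $T$-count at most $n$ by counting normal-form words (growth rate $(p(p-1))^n$, geometric-series denominator $p(p-1)-1$), compares the total volume of $\epsilon$-balls around them with the closed-form volume of $SU(p)$, and takes logarithms. Your bookkeeping of the Clifford prefactor differs slightly from the paper's exact count $m(t)=p^{2}((p-1)p)^{t+1}(1+p)^{2}$, but all such constant factors (including the $\Gamma$-shift) are absorbed by the $\gtrsim$, exactly as the paper itself does in its large-$n$ simplification.
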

\begin{proof}
To prove this theorem, we use a volume-counting argument similar to \cite{ross}. The total number of operators in canonical form with $T$-count $t\geq 1$ is given by:
\begin{equation}
m(t)=p^{2}((p-1)p)^{t+1}(1+p)^{2}.
\end{equation}
Note that this count includes overall phases (which must be powers of $\omega$). From this, we see that the total number of different operators in normal form with total $T$-count at most $n$ is \begin{equation}
N(n) =\sum_{t=0}^n m(t) =\frac{p^4 \left(\left(p^2-1\right)^2 ((p-1) p)^n-p^3+p\right)}{p(p-1)-1}.
\end{equation}
The total volume of $SU(p)$ is given in \cite{sphere} to be:
\begin{equation}
\text{vol}(SU(p))=\sqrt{2^{p-1}p}\pi^{\frac{(p-1)(p-2)}{2}}\prod_{k=1}^{p-1}\frac{1}{k!}.
\end{equation}
If $\epsilon$ is sufficiently small, then each Clifford+$T$ operators allows one to $\epsilon$-approximate operators in $SU(p)$ contained with the volume of a $d=(p^2-1)$-dimensional ball of radius $\epsilon$, which is:
\begin{equation}
    \text{vol}(B_{d}(\epsilon)) =\frac{\pi^{d/2}}{\Gamma(d/2+1)}{\epsilon}^{d}.
\end{equation}
To approximate any element to $\epsilon$ accuracy the complete volume of $SU(p)$ needs to be covered. The would require at least:
$$N \gtrsim \frac{\text{vol}\left(SU(p)\right)}{\text{vol}\left(B_{p^2-1}(\epsilon)\right)}$$ distinct Clifford+$T$ operators; from which we conclude that $n$ must satisfy: 
\begin{equation}
\frac{p^4 \left(\left(p^2-1\right)^2 ((p-1) p)^n-p^3+p\right)}{p(p-1) -1}\frac{\pi^{d/2}}{\Gamma(p^{2}/2+1)}{\epsilon}^{p^{2}-1} \geq \sqrt{p2^{p-1}}\pi^{\frac{(p-1)(p-2)}{2}}\prod_{k=1}^{p-1}\frac{1}{k!}.
\end{equation}
For large $n$ the  expression simplifies to:
\[
\frac{p^{4}((p(p-1))^{n})}{-1+p(p-1)}\frac{\pi^{(p^2-1)/2}}{\Gamma((p^2-1)/2)}{\epsilon}^{p^{2}-1} \gtrsim \sqrt{p2^{p-1}}\pi^{\frac{(p-1)(p-2)}{2}}\prod_{k=1}^{p-1}\frac{1}{k!}
\]
We use constants $A$ and $B$  as defined above  to further simplify the expression to obtain:
\begin{equation}
n\gtrsim \frac{\ln\left(\frac{B}{A}\right)+(p^{2}-1)\ln\left(1/{\epsilon}\right)}{\ln(p(p-1))}
\end{equation}
\end{proof}
Let us emphasize that equation \ref{t-bound} is an asymptotic  \textit{lower-bound} on the minimal $T$-count required to be able to approximate any unitary operator to a given accuracy $\epsilon$.

\section{Acknowledgements}
SP would like to thank Prof. P. S. Satsangi for guidance. The authors would like to thank N. J. Ross for his comments on an earlier version of the manuscript. This research is supported in part by a DST INSPIRE Faculty award, DST-SERB Early Career Research Award (ECR/2017/001023) and MATRICS grant (MTR/2018/001077). 
\appendix

\section{Relation between $SL(2,\mathbb Z_p)$ and the Clifford group}
\label{vf-appendix}
Here we show that $V:SL(2,\mathbb Z_p) \rightarrow SU(p)$ defined in \eqref{vf} is a homomorphism. To do this we must show that $V(\hat{F}) V(\hat{F}')=V(\hat{F}\hat{F}')$, where $F= \begin{pmatrix} a & b \\ c & d\end{pmatrix}$ and $F'=\begin{pmatrix} a' & b' \\ c' & d' \end{pmatrix}$. 

In order to do this, we will need to use the following result from number theory
\begin{equation}
    \sum_{j=0}^{p-1}\omega^{aj^2} = \left( \frac{a}{p}\right)_L \epsilon_p \sqrt{p},
\end{equation}
known as a quadratic Gauss sum. $\left( \frac{a}{p}\right)_L$ denotes what is known as the Legendre symbol in number theory. $\left( \frac{a}{p}\right)_L=+1$ if $a$ is a quadratic residue modulo $p$, $\left( \frac{a}{p}\right)_L-=-1$ if $a$ is not a quadratic residue modulo $p$, and $\left( \frac{a}{p}\right)_L=0$ if $p$ divides $a$. 

There are three cases:
\begin{enumerate}
    \item Case 1: $b=b'=0$. The result \begin{equation}
        V(F)V(F') = \left( \frac{aa'}{p} \right)_L \sum_{k=0}^{p-1} \omega^{2^{-1} a a' (a'c+a^{-1}c')} \ket{a a'k}\bra{k} = V(FF'),
    \end{equation} 
    follows easily from equation \eqref{vf} and the fact that the Legendre symbol is multiplicative: $$\left(\frac{a}{p}\right)_L\left(\frac{a'}{p}\right)_L= \left(\frac{aa'}{p}\right).$$
    \item Case 2: $b \neq 0$, $b'= 0$. In this case
    \begin{eqnarray*}
    V(F)V(F') & = & \left( \frac{-2ba'}{p} \right)_L \epsilon_p \frac{1}{\sqrt{p}} \displaystyle \sum_{j,k=0}^{p-1} \omega^{2^{-1} b^{-1}(ak^2-2jk+dj^2)}\ket{j}\bra{k}  \sum_{k'=0}^{p-1}\omega^{2^{-1} a'c'k'^2} \ket{a'k'}\bra{ k'} \\ 
     & = &
    \left( \frac{-2ba'}{p} \right)_L \epsilon_p \frac{1}{\sqrt{p}} \displaystyle \sum_{j,k=0}^{p-1} \omega^{2^{-1} b^{-1}(aa'^2k^2-2ja'k+dj^2)} \omega^{2^{-1} a'c'k^2} \ket{j}\bra{k} \\
      & = &
    \left( \frac{-2ba'}{p} \right)_L \epsilon_p \frac{1}{\sqrt{p}} \displaystyle \sum_{j,k=0}^{p-1} \omega^{2^{-1} b^{-1}a'(aa'k^2-2jk+a'^{-1}dj^2)} \omega^{2^{-1}b^{-1} a' bc'k^2} \ket{j}\bra{k} \\
          & = &
    \left( \frac{-2b{a'}^{-1}}{p} \right)_L \epsilon_p \frac{1}{\sqrt{p}} \displaystyle \sum_{j,k=0}^{p-1} \omega^{2^{-1} b^{-1}a'\left((aa'+bc')k^2-2jk+a'^{-1}dj^2\right)}  \ket{j}\bra{k} \\ & = & V(FF').
    \end{eqnarray*}
   We used the fact that $\left(\frac{a'}{p} \right)_L = \left(\frac{a'^{-1}}{p} \right)_L$ in the last line.
    
    \item Case 3: $b\neq 0$, $b' = 0$.
    This case follows from calculations analogous to the previous case.
    \item Case 4: $b\neq 0$ and $b'\neq 0$
    \begin{eqnarray*}
    V(F)V(F') & = & \left( \frac{4bb'}{p} \right)_L \epsilon_p^2 \frac{1}{p} \displaystyle \sum_{j,k,j',k'=0}^{p-1} \omega^{2^{-1} b^{-1}(ak^2-2jk+dj^2)}\ket{j}\bra{k}   \omega^{2^{-1} b'^{-1}(a'k'^2-2j'k'+d'j'^2)}\ket{j'}\bra{k'} \\
     & = & \left( \frac{bb'}{p} \right)_L \epsilon_p^2 \frac{1}{p} \displaystyle \sum_{j,k,k'=0}^{p-1} \omega^{2^{-1} b^{-1}(ak^2-2jk+dj^2)}   \omega^{2^{-1} b'^{-1}(a'k'^2-2kk'+d'k^2)}\ket{j}\bra{k'} \\
       & = & \left( \frac{bb'}{p} \right)_L \epsilon_p^2 \frac{1}{p} \displaystyle \sum_{j,k,k'=0}^{p-1} \omega^{2^{-1} b^{-1}(dj^2) + b'^{-1}(a'k'^2)}  \omega^{2^{-1} b'^{-1}(-2kk'+d'k^2)+2^{-1} b^{-1}(ak^2-2jk)}\ket{j}\bra{k'}
       \\
       & = & \left( \frac{bb'}{p} \right)_L \epsilon_p^2 \frac{1}{p} \displaystyle \sum_{j,k,k'=0}^{p-1} \omega^{2^{-1} b^{-1}b'^{-1}(b'dj^2+ ba'k'^2)}  \omega^{2^{-1} b'^{-1}b^{-1}\left((ab'+bd')k^2-2(bk'+b'j)k\right)}\ket{j}\bra{k'} \\
       & = & \left( \frac{bb'}{p} \right)_L \epsilon_p^2 \frac{1}{p} \displaystyle \sum_{j,k,k'=0}^{p-1} \omega^{2^{-1} b^{-1}b'^{-1}(b'dj^2+ ba'k'^2)}  \omega^{2^{-1} b'^{-1}b^{-1}\left(b''k^2-2(bk'+b'j)k\right)}\ket{j}\bra{k'}
    \end{eqnarray*}
    We now consider two sub-cases:
    \begin{enumerate}
        \item  If $b''=ab'+bd'=0$, then we can write \begin{equation}
            F F' = \begin{pmatrix}
            -bb'^{-1} & 0 \\
           -a'b^{-1}-db'^{-1} & -b'b^{-1}
            \end{pmatrix}
        \end{equation}
        and $V(F)V(F')$ reduces to:
    \begin{eqnarray}
    & = & \left( \frac{bb'}{p} \right)_L \epsilon_p^2 \frac{1}{p} \displaystyle \sum_{j,k,k'=0}^{p-1} \omega^{2^{-1} b^{-1}b'^{-1}(b'dj^2+ ba'k'^2)}  \omega^{2^{-1} b'^{-1}b^{-1}\left(-2(bk'+b'j)k\right)}\ket{j}\bra{k'} \\ & = &
    \left( \frac{-bb'^{-1}}{p} \right)_L  \displaystyle \sum_{j}^{p-1} \omega^{2^{-1} b^{-1}b'^{-1}(b^2db'^{-1}+ ba')j^2}  \ket{-bb'^{-1}j}\bra{j} 
    \\ & = &
    \left( \frac{-bb'^{-1}}{p} \right)_L  \displaystyle \sum_{j}^{p-1} \omega^{-2^{-1} b b'^{-1}(-db'^{-1}+ a'b^{-1})j^2}  \ket{-bb'^{-1}j}\bra{j} \\
    & = &  V(FF')
    \end{eqnarray}
    
    \item If $b''\neq 0$, we can evaluate the quadratic Gauss sum over $k$ by completing the square to obtain,
    \begin{eqnarray}
      & = & \left( -2^{-1}\frac{b''}{p} \right)_L \epsilon_p \frac{1}{\sqrt{p}} \displaystyle \sum_{j,k'=0}^{p-1} \omega^{2^{-1} (bb')^{-1}(b'dj^2+ ba'k'^2)}  \omega^{-2^{-1} (bb'b'')^{-1}\left(bk'+b' j \right)^2}\ket{j}\bra{k'} \\
      & = & \left( -2^{-1}\frac{b''}{p} \right)_L \epsilon_p \frac{1}{\sqrt{p}} \displaystyle \sum_{j,k=0}^{p-1} \omega^{2^{-1} (bb')^{-1}\left(b'dj^2+ ba'k^2 -(b'')^{-1}\left(bk+b' j \right)^2\right)}\ket{j}\bra{k}
      \\
      & = & \left( -2^{-1}\frac{b''}{p} \right)_L \epsilon_p \frac{1}{\sqrt{p}} \displaystyle \sum_{j,k=0}^{p-1} \omega^{2^{-1} (b'')^{-1}\left(a a' k^2+\frac{a b' d j^2}{b}+\frac{a'b d' k^2}{b'}-\frac{b' j^2}{b}-\frac{b k^2}{b'}+dd' j^2-2 j k \right)}\ket{j}\bra{k} \\
      & = & V(FF')
    \end{eqnarray}
    
    \end{enumerate}
\end{enumerate}

Note that, while $V$ defines a homomorphism to $SU(p)$, the map $D_{\vec{\chi}}$ does not preserve phases, and is map to $PSU(p)$: $D:\mathbb Z_p \otimes \mathbb Z_p \rightarrow PSU(p)$. To see this note that
\begin{eqnarray}
D_{\chi_1} D_{\chi_2}& = & \omega^{2^{-1}(x_1 z_1+x_2z_2)}X^{x_1}Z^{z_1}X^{x_2}Z^{z_2} \\ & = & 
\omega^{2^{-1}(x_1 z_1+x_2z_2)+z_1x_2}X^{x_1+x_2}Z^{z_1+z_2} \\
& = & 
\omega^{2^{-1}\left(-x_1z_2+z_1x_2\right)}D_{\chi_1+\chi_2}
\end{eqnarray}

\bibliographystyle{ssg}
\bibliography{qudit}

\end{document}